\def\ba{\begin{array}}
\def\ea{\end{array}}

\def\dps{\displaystyle}

\documentclass[aps,amsmath,amssymb,amsfonts,showpacs,twocolumn,pra]{revtex4}
\usepackage{graphicx}
\usepackage{caption2}
\usepackage{epstopdf}
\usepackage{braket}
\usepackage{graphicx}
\usepackage{amsmath}
\usepackage{amscd}
\usepackage{amsthm}
\usepackage{amssymb} \usepackage{latexsym}
\usepackage{eufrak}
\usepackage{euscript}
\usepackage{epsfig}
\usepackage{graphics}
\usepackage{array}
\usepackage{enumerate}
\usepackage{dsfont}
\usepackage{color}
\usepackage{wasysym}

\def\qed{\leavevmode\unskip\penalty9999 \hbox{}\nobreak\hfill
     \quad\hbox{\leavevmode  \hbox to.77778em{%
               \hfil\vrule   \vbox to.675em%
               {\hrule width.6em\vfil\hrule}\vrule\hfil}}
     \par\vskip3pt}

\newtheorem{lemma}{Lemma}

\newcommand{\R}{{\mathbb R}}
\theoremstyle{theorem}
\newtheorem{thm}{Theorem}[section]
\newcommand{\N}{{\mathbb N}}
\newcommand{\al}{\alpha}

\newcommand{\Q}{{\mathbb Q}}

\newcommand{\vol}{{\mathrm{vol}}}

\begin{document}

\title{Towards Grothendieck Constants and LHV Models in Quantum Mechanics}

\author{Bobo Hua$^{1,2}$}
\author{Ming Li$^{3}$}
\author{Tinggui Zhang$^{4}$}
\author{Chunqin Zhou$^{5}$}
\author{Xianqing Li-Jost$^{6}$}
\author{Shao-Ming Fei$^{6,7}$}

\vspace{4ex}

\affiliation{
$^1$ School of Mathematical Sciences, LMNS, Fudan University, Shanghai 200433, China\\
$^2$Shanghai Center for Mathematical Sciences, Fudan University, Shanghai 200433, China\\
$^{3}$ College of the Science, China University of Petroleum, 266580 Qingdao, China\\
$^4$ School of Mathematics and Statistics, Hainan Normal University, Haikou 571158, China\\
$^5$ Department of Mathematics, Shanghai Jiaotong University, Shanghai 200240, China\\
$^6$ Max-Planck-Institute for Mathematics in the Sciences, Leipzig 04103, Germany\\
$^7$ School of Mathematical Sciences, Capital Normal University, Beijing 100048, China
}

\begin{abstract}
We adopt a continuous model to estimate the Grothendieck
constants. An analytical formula to compute the lower bounds of
Grothendieck constants has been explicitly derived for arbitrary orders,
which improves previous bounds. Moreover, our lower bound of the Grothendieck constant of order three
gives a refined bound of the threshold value for the nonlocality of the two-qubit Werner states.
\end{abstract}

\pacs{03.65.Ud, 03.67.-a}

\maketitle

\section{Introduction}
Quantum mechanics exhibits the nonlocality of the nature in essence.
The impossibility of reproducing all correlations observed in
composite quantum systems using models \`a la
Einstein-Podolsky-Rosen (EPR) \cite{EinsteinPodolskyRosen35} was
proven in 1964 by Bell \cite{Bell64}. A quantum state is said to admit a local
hidden variable (LHV) model if all the measurement outcomes
can be modeled as a classical random distribution over
a probability space. Consider a bipartite state $\rho$
in $\mathcal{H}_\texttt{A}\otimes\mathcal{H}_\texttt{B}$ with subsystems $\texttt{A}$ and $\texttt{B}$.
If Alice performs a measurement $A$ on the subsystem $\texttt{A}$ with an outcome $a_i$ and, at
space-like separation, Bob performs a measurement $B$ on the subsystem $\texttt{B}$ with an outcome
$b_j$, then an LHV model supposes that the joint probability of
getting $a_i$ and $b_j$ satisfies
$$
\mathrm{Pr}(a_i,b_j|A,B,\rho)=\int_{\Omega}\mathrm{Pr}(a_i|A,\lambda)\,\mathrm{Pr}(b_j|B,\lambda)\,d\omega^{\rho}(\lambda),
$$
where $d\omega^{\rho}(\lambda)$ is some
distribution over a space $\Omega$ of hidden variable $\lambda$.
A quantum state is called \emph{local} if it admits an LHV model, and
\emph{nonlocal} otherwise.

In his seminal work, Bell showed that all quantum states
admitting LHV models satisfy the so-called Bell inequalities \cite{Bell64}.
That is, a state admits no LHV models if it violates some Bell
inequalities. It is known that every pure entangled bipartite or multipartite state
violates a generalized Bell inequality \cite{GisinPeres92,PopescuRohrlich92}.
Namely, for pure states the entanglement and the non-locality coincide.
However for mixed states, the situation is more complicated.
There are no general methods to judge whether a mixed state admits an LHV model or not, i.e. to find all Bell inequalities
is computationally hard \cite{Pitowsky,AlonNaor}. Even for the most concerning two-qubit Werner states,
the precise threshold value of nonlocality is still unknown.

As the ``fundamental theorem in the metric theory of tensor products", the Grothendieck's theorem
\cite{Grothendieck53} had a major impact on Banach space
theory. The constants related to the
Grothendieck's theorem are nowadays called Grothendieck constants \cite{Pisier12}.
It turns out that the Grothendieck constants are related to the Bell inequalities, observed by Tsirelson \cite{Tsirelson87}.
Of particular interest, Acin-Gisin-Toner \cite{AcinGisinToner06} demonstrated that the threshold value for the
non-locality of the Werner states for projective measurements is given explicitly by the
Grothendieck constant of order three, $K(3),$ see Section \ref{s:Grothendieck} for the definition. This reduces the
problem of the nonlocality of Werner states to the computation of the exact value of $K(3).$
However, it is formidably difficult to compute the Grothendieck constants except for the case of order two.
Generally, what one can do is to estimate the lower and upper bounds of the Grothendieck constants.

In this paper, by generalizing the Bell operator with 465 measurement settings on each side in \cite{Vertesi08} to
a continuous model with infinitely many measurement settings, we present
an analytical formula in estimating the lower bounds of the
Grothendieck constants. This formula is valid for Grothendieck constants of arbitrary order and
improves many previously obtained bounds. From our
lower bound of $K(3)$, we
derive a bound of the threshold value for the nonlocality of the two-qubit Werner states, which
gives the best knowledge about such nonlocality up to date.

\section{Lower bound of Grothendieck constant for arbitrary order $d$}\label{s:Grothendieck}

Let $\mathcal{M}_m(\R)$ be the set of $m\times m$ real matrices and
$S^{d-1}$ the unit sphere in $\R^d$, $m,d\in \N$. Given $M\in
\mathcal{M}_m(\R),$ we define
\begin{equation}\label{e:classical}
C(M)=\sup\left|\sum_{i,j=1}^mM_{ij}\,a_ib_j\right|,
\end{equation}
where the supremum is taken over all possible assignment $a_i,b_j\in
\{1,-1\},$ $1\leq i,j\leq m$. Replacing $a_i,b_j$ by $d$-dimensional unit
vectors, we define
$$
Q(M)=\sup_{\mathbf{a_i},\mathbf{b_j}\in S^{d-1}}\left|\sum_{i,j=1}^mM_{ij}\,\mathbf{a_i}\cdot\mathbf{b_j}\right|,
$$
where the supremum is taken over all $d$-dimensional unit
vectors $\mathbf{a_i}$ and $\mathbf{b_j}$ and $\mathbf{a_i}\cdot\mathbf{b_j}$ denotes their scalar product.
The Grothendieck constant of order $d$ is defined by
\begin{equation}\label{G}
K(d)=\sup_{m\geq 1}\,\sup_{\substack{M\in\mathcal{M}_m(\R)\\M\neq 0}}\frac{Q(M)}{C(M)}.
\end{equation}

The value of $C(M)$ depends only on the choice of the matrix $M$. Besides $M$,
$Q(M)$ also depends on the dimension $d$ of Euclidean space $\R^d$ where we choose unit
vectors $\mathbf{a_i}$ and $\mathbf{b_j}$.
It is a great challenge to evaluate the Grothendieck constant $K(d)$ for general $d$.
Till now the only exactly known result of $K(d)$ is for $d=2$, $K(2)=\sqrt{2}$ \cite{Krivine79}.
For $d\geq 3$, there are some lower bounds of $K(d)$: For instance,
Bri\"et-Buhrman-Toner \cite{BrietBuhrmanToner11} obtained a lower bound of $K(d)$ for general $d$,
$$
K(d)\geq \frac{\pi}{d}\left(\frac{\Gamma(\frac{d+1}{2})}{\Gamma(\frac{d}{2})}\right)^2.
$$

In this section, we propose a continuous model to compute the lower
bounds of the Grothendieck constant for arbitrary $d$.
Our results improve some known lower bounds. In particular, the lower bound of $K(3)$ we obtained is the best up to date,
which improves the result on the non-locality of the Werner states.

For any $n\in \N,$ let $m=n+n(n-1)/2.$ We choose the following
special $M\in M_m(\R)$ such that \cite{Vertesi08}
$$
C(M)=\sup\left|\sum_{i,j=1}^na_ib_j+\sum_{i<j}[\al_{ij}(b_i-b_j)+\beta_{ij}(a_i-a_j)]\right|,
$$
where the supremum is taken over $a_i,\alpha_{ij}\in \{1,-1\}$ and
$b_j,\beta_{ij}\in \{1,-1\}$ ($\{a_i,\alpha_{ij}\}$ stands for
$\{a_i\}_{i=1}^m$ in the definition \eqref{e:classical}, similarly $\{b_i,\beta_{ij}\}$ for $\{b_j\}_{j=1}^m$). Hence by
choosing $\al_{ij}=\mathrm{sign}(b_i-b_j)$ and $\beta_{ij}=\mathrm{sign}(a_i-a_j)$, one has
$$
C(M)=\sup\left|\sum_{i,j=1}^na_ib_j+\sum_{i<j}(|b_i-b_j|+|a_i-a_j|)\right|=n^2.
$$

Correspondingly, for given $M$, to get a better bound of $Q(M)$ one needs to suitably
choose the vectors $\mathbf{a_i}$ and $\mathbf{b_j}$.
By setting vectors $\mathbf{\al_{ij}}$ ($\mathbf{\beta_{ij}}$ resp.) parallel to
$\mathbf{b_i}-\mathbf{b_j}$ ($\mathbf{a_i}-\mathbf{a_j}$ resp.)
and $\mathbf{a_i}=\mathbf{b_i},$ $1\leq i\leq n$, one obtains
\begin{equation}
Q(M)=\sup_{\mathbf{a_i}\in S^{d-1}}\left(|\sum_i
\mathbf{a_i}|^2+2\sum_{i<j}|\mathbf{a_i}-\mathbf{a_j}|\right).
\end{equation}
Therefore
\begin{equation}\label{e:discrete version}K(d)\geq
\sup_{\mathbf{a_i}\in S^{d-1}}\left(\left|\frac{1}{n}\sum_i
\mathbf{a_i}\right|^2+\frac{1}{n^2}\sum_{i\neq
j}\left|\mathbf{a_i}-\mathbf{a_j}\right|\right),
\end{equation}
where the supremum is taken over $(\mathbf{a_1},\cdots,\mathbf{a_n})\in (S^{d-1})^n$
for any $n\geq 1$.

Using this discrete version of optimization problem, V\'ertesi \cite{Vertesi08} runs
a numerical simulation to obtain the lower bounds of $K(d),$ $3\leq
d\leq 5.$ Since the complexity of the computation grows
exponentially, the computation of the lower bounds becomes impossible
for large $d.$ Instead of this, we propose a continuous model to
reformulate this problem. Let $P(S^{d-1})$ denote the space of
probability measures on $S^{d-1}.$ Then we have

\begin{thm} For $d\geq 1,$
\begin{equation}\label{e:measures}
\ba{rcl}
K(d)&\geq& \sup_{\mu\in P(S^{d-1})}\left(\left|\int_{S^{d-1}}\mathbf{x}\,d\mu(\mathbf{x})\right|^2\right.\\[5mm]
&&\dps\left.+\int_{S^{d-1}\times S^{d-1}}
|\mathbf{x}-\mathbf{y}|\,d\mu(\mathbf{x})d\mu(\mathbf{y})\right).
\ea
\end{equation}
\end{thm}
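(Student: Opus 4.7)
The plan is to reduce the measure-theoretic bound \eqref{e:measures} to the discrete bound \eqref{e:discrete version} via weak approximation by empirical measures. First I would rewrite the first term on the right-hand side as a double integral,
$$
\left|\int_{S^{d-1}} \mathbf{x}\,d\mu(\mathbf{x})\right|^2 = \int_{S^{d-1}\times S^{d-1}} \mathbf{x}\cdot\mathbf{y}\,d\mu(\mathbf{x})d\mu(\mathbf{y}),
$$
so that the entire right-hand side of \eqref{e:measures} collapses to
$$
F(\mu) := \int_{S^{d-1}\times S^{d-1}} f(\mathbf{x},\mathbf{y})\,d\mu(\mathbf{x})d\mu(\mathbf{y}), \qquad f(\mathbf{x},\mathbf{y}) := \mathbf{x}\cdot\mathbf{y} + |\mathbf{x}-\mathbf{y}|.
$$
The key structural point is that $f$ is bounded and continuous on the compact product space $S^{d-1}\times S^{d-1}$.

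Next I would observe that for any $n$-tuple $(\mathbf{a_1},\ldots,\mathbf{a_n})\in(S^{d-1})^n$, the associated empirical measure $\mu_n := \frac{1}{n}\sum_{i=1}^n \delta_{\mathbf{a_i}}$ satisfies
$$
F(\mu_n) = \left|\frac{1}{n}\sum_{i=1}^n \mathbf{a_i}\right|^2 + \frac{1}{n^2}\sum_{i\neq j} |\mathbf{a_i}-\mathbf{a_j}|,
$$
since the diagonal contributions of $|\mathbf{x}-\mathbf{y}|$ vanish. By \eqref{e:discrete version} this is precisely the discrete objective, hence $F(\mu_n)\leq K(d)$ for every empirical probability measure on $S^{d-1}$.

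It then remains to extend the inequality $F(\mu)\leq K(d)$ from empirical measures to arbitrary $\mu\in P(S^{d-1})$. For this I would fix any $\mu$ and approximate it weakly by a sequence of empirical measures $\mu_n$; on the compact metric space $S^{d-1}$ this is standard (for instance, by drawing i.i.d.\ samples from $\mu$ and invoking the strong law of large numbers, or by a direct discretization along a fine Borel partition with rational weights of common denominator $n$). Weak convergence $\mu_n\to\mu$ implies $\mu_n\otimes\mu_n\to\mu\otimes\mu$ weakly, and since $f$ is bounded and continuous the Portmanteau theorem yields $F(\mu_n)\to F(\mu)$. Hence $F(\mu)=\lim_n F(\mu_n)\leq K(d)$, and taking the supremum over $\mu$ produces \eqref{e:measures}. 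The main (mild) obstacle in this route is the weak-density step; everything else is a direct consequence of \eqref{e:discrete version} and the continuity of $f$.
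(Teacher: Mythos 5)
Your proposal is correct and follows essentially the same route as the paper: establish the bound for finitely supported probability measures with equal (or rational) weights directly from \eqref{e:discrete version}, then extend to arbitrary $\mu\in P(S^{d-1})$ by weak$^*$ density together with the continuity of $\mathbf{x}\cdot\mathbf{y}$ and $|\mathbf{x}-\mathbf{y}|$ on the compact product. The only cosmetic difference is the density step, where you use empirical-measure approximation (i.i.d.\ sampling or a partition with rational weights) while the paper invokes the Krein--Milman theorem to approximate by convex combinations of delta measures.
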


\begin{proof}
From the discrete version \eqref{e:discrete version}, the assertion
\eqref{e:measures} holds for rational convex combination of delta
measures, i.e. $\mu=\sum_{i=1}^N \lambda_i \delta_{\mathbf{a_i}}$ with
$\lambda_i\in \Q_+,$ $\sum_{i=1}^N\lambda_i=1$ and $\mathbf{a_i}\in S^{d-1}$.
Note that any probability measure $\mu\in
P(S^{d-1})$ can be approximated by convex combination of delta
measures in the weak topology (precisely, weak* topology in the terminology of functional analysis).
This follows from Krein-Milman theorem and the fact that delta measures are extreme points for the set of
probability measures \cite{referee}. Since $|\mathbf{x}|,|\mathbf{x}-\mathbf{y}|$ are
continuous functions, the theorem follows from the definition of weak
convergence of measures.
\end{proof}

Hence, to derive an effective lower bound of $K(d)$ it suffices to
choose some good measures for this optimization problem. It seems that
the problem becomes more complicated since the finite
dimensional optimization problem has been transformed to an infinite dimensional problem on
the space of measures. However, this is in fact an advantage which allows one to choose nice
absolutely continuous measures on a sphere to get explicit lower
bounds of the Grothendieck constants.

Let $(\phi_1,\phi_2,\cdots,\phi_{d-1})$ be the spherical coordinates of $S^{d-1}\subset \R^{d}$ such that
\[\left\{\begin{array}{l}
x_1=\sin{\phi_1}\sin{\phi_2}\cdots\sin(\phi_{d-1}),\\
x_2=\sin{\phi_1}\sin{\phi_2}\cdots\cos(\phi_{d-1}),\\
\cdots\\
x_{d-1}=\sin{\phi_1}\cos{\phi_2},\\
x_d=\cos{\phi_1},
\end{array}\right.\] where $\phi_i\in [0,\pi]$ for $1\leq i\leq d-2$ and $\phi_{d-1}\in [0,2\pi).$
We denote by $d\mu=\sin^{d-2}\phi_1\sin^{d-3}\phi_{2}\cdots\sin\phi_{d-2}\,d\phi_1d\phi_2\cdots d\phi_{d-1}$ the
spherical (volume) measure of $S^{d-1}$. For simplicity, we also denote by $\mathbf{x}=(\mathbf{x}',x_{d}),$ $\mathbf{x}'\in \R^{d-1}$,
the Cartesian coordinates of $\mathbf{x}\in S^{d-1},$  by $P_d(\mathbf{x})=x_d$
the projection to the $d$-th coordinate, and by $\phi_1(\mathbf{x})$ the first
spherical coordinate of $\mathbf{x}$. For any $0\leq a\leq b\leq
\frac{\pi}{2},$ we denote
$$\Omega_{a}^b=\{\mathbf{x}\in S^{d-1}: a\leq \phi_1(\mathbf{x})\leq b\},$$
which is contained in the upper hemisphere $S^{d-1}\cap \{x_d\geq 0\}.$
To obtain the lower bounds of $K(d),$ we will choose the uniform probability measure
on $\Omega_a^b,$ i.e.
$\mu_{a,b}=\frac{1}{\vol\,{\Omega_a^b}}\vol|_{\Omega_a^b}.$ The variables
$a,b$ are introduced to refine (i.e. maximize in some sense) the lower bound of $K(d)$,
since we have no priori knowledge of the optimal measure which
attains the maximum on the right hand side of \eqref{e:measures}.

Then second term of the right hand side of \eqref{e:measures} corresponds to
$$
\ba{l}\dps
\int_{\Omega_{a}^b\times\Omega_{a}^b}|\mathbf{x}
-\mathbf{y}|d\mu_{a,b}(\mathbf{x})d\mu_{a,b}(\mathbf{y})\\
~~~~~~~~~~~\dps=\frac{1}{\vol{(\Omega_a^b})^2}
\int_{\Omega_{a}^b\times\Omega_{a}^b}|\mathbf{x}-\mathbf{y}|d\mu(\mathbf{x})d\mu(\mathbf{y}),
\ea
$$
which involves a $2(d-1)$-multiple integration. This is
the obstruction for the numerical computation of the lower bounds for large $d.$
Nevertheless, the computation can be considerably simplified
by the symmetry of the sphere.
\begin{lemma}
For any $\mathbf{x},\,\tilde{\mathbf{x}}\in S^{d-1}$ satisfying $\phi_1(\mathbf{x})=\phi_1(\tilde{\mathbf{x}}),$
we have
\begin{equation}\label{e:1}
\int_{\Omega_a^b}|\mathbf{x}-\mathbf{y}|d\mu(\mathbf{y})=\int_{\Omega_a^b}|\tilde{\mathbf{x}}-\mathbf{y}|d\mu(\mathbf{y}).
\end{equation}
\end{lemma}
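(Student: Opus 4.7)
The plan is to exploit the rotational symmetry of $\Omega_a^b$ about the $x_d$-axis. The set $\Omega_a^b=\{\mathbf{y}\in S^{d-1}: a\leq \phi_1(\mathbf{y})\leq b\}$ is a spherical annulus defined by a condition on the single coordinate $\phi_1$, which corresponds to $y_d=\cos\phi_1$; equivalently $\Omega_a^b=\{\mathbf{y}\in S^{d-1}:\cos b\leq y_d\leq \cos a\}$. Thus $\Omega_a^b$ is invariant under any orthogonal transformation of $\R^d$ that fixes the $x_d$-axis, i.e.\ under any element of the subgroup $O(d-1)\subset O(d)$ acting on the first $d-1$ coordinates.

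First I would construct an explicit rotation sending $\mathbf{x}$ to $\tilde{\mathbf{x}}$. Write $\mathbf{x}=(\mathbf{x}',x_d)$ and $\tilde{\mathbf{x}}=(\tilde{\mathbf{x}}',\tilde{x}_d)$ with $\mathbf{x}',\tilde{\mathbf{x}}'\in\R^{d-1}$. The hypothesis $\phi_1(\mathbf{x})=\phi_1(\tilde{\mathbf{x}})$ forces $x_d=\tilde{x}_d=\cos\phi_1$ and $|\mathbf{x}'|=|\tilde{\mathbf{x}}'|=\sin\phi_1$. Hence there exists $R_0\in O(d-1)$ with $R_0\mathbf{x}'=\tilde{\mathbf{x}}'$ (for instance, the rotation in the two-plane spanned by $\mathbf{x}'$ and $\tilde{\mathbf{x}}'$, identity on the orthogonal complement). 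Let $R\in O(d)$ be the block-diagonal extension that acts by $R_0$ on the first $d-1$ coordinates and fixes $x_d$. Then $R\mathbf{x}=\tilde{\mathbf{x}}$, $R$ preserves the $d$-th coordinate, and so $R(\Omega_a^b)=\Omega_a^b$.

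Next I would apply the change of variables $\mathbf{y}=R\mathbf{z}$ in the right-hand side of \eqref{e:1}. Since the spherical volume measure $d\mu$ on $S^{d-1}$ is invariant under $O(d)$, $R$ preserves $\Omega_a^b$, and $R$ is an isometry of $\R^d$, one gets
\begin{equation*}
\int_{\Omega_a^b}|\tilde{\mathbf{x}}-\mathbf{y}|\,d\mu(\mathbf{y})
=\int_{\Omega_a^b}|R\mathbf{x}-R\mathbf{z}|\,d\mu(\mathbf{z})
=\int_{\Omega_a^b}|\mathbf{x}-\mathbf{z}|\,d\mu(\mathbf{z}),
\end{equation*}
which is precisely the left-hand side.

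There is no real obstacle: the whole content of the lemma is the rotational symmetry of $\Omega_a^b$ about the $x_d$-axis together with $O(d)$-invariance of $d\mu$. The only point demanding minor care is verifying that a single rotation in $O(d-1)$ suffices to map $\mathbf{x}'$ to $\tilde{\mathbf{x}}'$, which follows immediately from their having equal Euclidean norm. The same argument, incidentally, shows that $\mathbf{x}\mapsto\int_{\Omega_a^b}|\mathbf{x}-\mathbf{y}|\,d\mu(\mathbf{y})$ depends on $\mathbf{x}$ only through $\phi_1(\mathbf{x})$, which is what the subsequent computation will exploit to reduce the $2(d-1)$-fold integral to a much lower-dimensional one.
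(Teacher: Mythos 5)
Your proof is correct and follows essentially the same route as the paper: both construct a block-diagonal rotation (the paper uses $B\in SO(d-1)$ extended to $A\in SO(d)$) sending $\mathbf{x}$ to $\tilde{\mathbf{x}}$ while preserving $\Omega_a^b$, then invoke the rotation invariance of the spherical measure and of the Euclidean distance. No substantive difference.
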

\begin{proof}
For given $\mathbf{x}$ and $\tilde{\mathbf{x}}$ with $P_d(\mathbf{x})=P_d(\tilde{\mathbf{x}}),$
there is an isometry of $S^{d-1}$, $A\in SO(d),$ such that $A(\mathbf{x})=\tilde{\mathbf{x}}$ and
$A(\Omega_a^b)=\Omega_a^b.$ This can be seen as follows: Without loss of generality, one may write $\mathbf{x}=(\mathbf{x}',c)$
and $\tilde{\mathbf{x}}=(\tilde{\mathbf{x}}',c)$ where $c=P_d(\mathbf{x}).$ Since $\mathbf{x}',\tilde{\mathbf{x}}'\in
\sqrt{1-c^2}\,S^{d-2},$ there is a $B\in SO(d-1)$ such that
$B\mathbf{x}'=\tilde{\mathbf{x}}'.$ Hence one can take
\[A=\begin{pmatrix}
  B & 0  \\
  0 & 1
 \end{pmatrix}.\]
Since $SO(d)$ acts isometrically on $S^{d-1},$ the equation \eqref{e:1} follows.
\end{proof}

By the Lemma, for given $\mathbf{x}\in \Omega_a^b$ the integral over
$\mathbf{y}$ only depends on $\phi_1(\mathbf{x}).$ Without loss of generality, we may choose
$\mathbf{x}=(0,0,\cdots,\sin{\phi_1},\cos{\phi_1}),$ i.e. its spherical
coordinate is $(\phi_1,0,\cdots,0)$. Then for any $\mathbf{y}$ whose
spherical coordinate reads $(\psi_1,\psi_2,\cdots,\psi_{d-1}),$
one has
\begin{equation}\label{xmy}
|\mathbf{x}-\mathbf{y}|=\sqrt{2-2\sin{\phi_1}\sin{\psi_1}\cos{\psi_2}-2\cos{\phi_1}\cos{\psi_1}}.
\end{equation}
Therefore we obtain
$$
\ba{l}
\dps\int_{\Omega_{a}^b\times\Omega_{a}^b}|\mathbf{x}-\mathbf{y}|d\mu(\mathbf{x})d\mu(\mathbf{y})\\[4mm]
\dps=\vol(S^{2-3})\vol(S^{d-3})\int_a^b\int_a^b\int_0^\pi f(\phi_1,\psi_1,\psi_2)d\phi_1d\psi_1d\psi_2
\ea
$$
for $d\geq 3$, where
\begin{equation}\label{e:f-function}
f(\phi_1,\psi_1,\psi_2)=|\mathbf{x}-\mathbf{y}|\sin^{d-2}\phi_1\sin^{d-2}{\psi_1}\sin^{d-3}\psi_2,
\end{equation}
with $|\mathbf{x}-\mathbf{y}|$ given by (\ref{xmy}).
Combining above results, we have the following theorem:

\begin{thm}\label{thm:main}
The Grothendieck constant $K(d)$, $d\geq3$, satisfies
\begin{widetext}
\begin{equation}\label{e:explicit formula}
K(d)\geq\frac{1}{\left(\int_a^b\sin^{d-2}\phi_1d\phi_1\right)^2}\left[\left(\left.
\frac{\sin^{d-1}\phi_1}{d-1}\right|_a^b\right)^2+\frac{\vol(S^{d-3})}
{\vol(S^{d-2})}\int_a^b\int_a^b\int_0^\pi f(\phi_1,\psi_1,\psi_2) d\phi_1d\psi_1d\psi_2\right],
\end{equation}
\end{widetext}
where $f(\phi_1,\psi_1,\psi_2)$ is given by \eqref{xmy} and \eqref{e:f-function}.
\end{thm}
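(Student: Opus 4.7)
The plan is to apply the already-proved inequality \eqref{e:measures} to the explicit family of probability measures $\mu_{a,b}=(\vol\,\Omega_a^b)^{-1}\vol|_{\Omega_a^b}$ and to evaluate both terms on the right-hand side in closed form. The key idea is to exploit the axial symmetry of $\Omega_a^b$ (invariance under $SO(d-1)$ acting on the first $d-1$ coordinates) to collapse the $2(d-1)$-dimensional integration down to three genuine variables $\phi_1,\psi_1,\psi_2$.

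First I would dispense with the normalization: the spherical volume form factorizes as $\sin^{d-2}\phi_1\,d\phi_1$ times the volume form of $S^{d-2}$, so integrating out $\phi_2,\ldots,\phi_{d-1}$ yields $\vol(\Omega_a^b)=\vol(S^{d-2})\int_a^b\sin^{d-2}\phi_1\,d\phi_1$. This is what produces the $\left(\int_a^b\sin^{d-2}\phi_1\,d\phi_1\right)^{-2}$ prefactor in the target inequality. Next, for the center-of-mass term $\left|\int \mathbf{x}\,d\mu_{a,b}\right|^2$, rotational symmetry around the $x_d$-axis annihilates the first $d-1$ Cartesian components, leaving only $\int x_d\,d\mu_{a,b}$. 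Using $x_d=\cos\phi_1$ and the elementary antiderivative $\int\cos\phi_1\sin^{d-2}\phi_1\,d\phi_1=\sin^{d-1}\phi_1/(d-1)$, this directly gives the first summand inside the square brackets of \eqref{e:explicit formula}.

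The heart of the argument is the double integral. By the Lemma, the inner integral $\int_{\Omega_a^b}|\mathbf{x}-\mathbf{y}|\,d\mu(\mathbf{y})$ depends on $\mathbf{x}$ only through $\phi_1(\mathbf{x})$, so I may choose the representative $\mathbf{x}=(0,\ldots,0,\sin\phi_1,\cos\phi_1)$. Formula \eqref{xmy} then exhibits $|\mathbf{x}-\mathbf{y}|$ as a function of only $\phi_1,\psi_1,\psi_2$, so integrating the spherical measure over $\psi_3,\ldots,\psi_{d-1}$ contributes exactly $\vol(S^{d-3})$. Running the same reduction on the outer $\mathbf{x}$-integration contributes a further $\vol(S^{d-2})$ coming from $\phi_2,\ldots,\phi_{d-1}$. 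Dividing by $\vol(\Omega_a^b)^2=\vol(S^{d-2})^2\left(\int_a^b\sin^{d-2}\phi_1\,d\phi_1\right)^2$ leaves the stated factor $\vol(S^{d-3})/\vol(S^{d-2})$ multiplying the triple integral of $f$.

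The only real hazard is bookkeeping the spherical volume factors consistently across these successive factorizations; no analytic estimate is required beyond the symmetry reductions provided by the Lemma and the explicit antiderivative of $\cos\phi_1\sin^{d-2}\phi_1$. Assembling the two contributions yields precisely \eqref{e:explicit formula}.
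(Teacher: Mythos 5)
Your proposal is correct and follows essentially the same route as the paper: plugging the uniform measure $\mu_{a,b}$ on $\Omega_a^b$ into the measure-valued lower bound, using axial symmetry and the Lemma to reduce the $2(d-1)$-fold distance integral to the triple integral of $f$ with the factor $\vol(S^{d-2})\vol(S^{d-3})$ (your bookkeeping here even silently corrects the paper's typographical ``$\vol(S^{2-3})$''), and evaluating the center-of-mass term via $\int\cos\phi_1\sin^{d-2}\phi_1\,d\phi_1=\sin^{d-1}\phi_1/(d-1)$ before dividing by $\vol(\Omega_a^b)^2$. No gaps.
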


By reducing a $2(d-1)$-multiple integration to a triple integration for any $d\geq3$,
we have obtained a lower bound of the Grothendieck constants which can be easily calculated via numerical methods.
By varying $a,b$ in the domain $\{(a,b):0\leq a<b\leq \frac{\pi}{2}\},$ one may get a refined lower bound of $K(d)$ by
maximizing the right hand of \eqref{e:explicit formula} for these $a,b$.
For instance, by taking $a=0$, $b=1.04819755$, one gets $K(3)\geq 1.41758$.
Taking $a=0.742832$, $b=0.749115$, one gets $K(5)\geq 1.46112$.
Some numerical results are listed in the following table:
\begin{center}
    \begin{tabular}{| l | l|  l |  l|}
     \hline
  & \multicolumn{3}{|c|}{$K(d)\geq$} \\
    \hline
    d& V\'ertesi  &Bri\"et etc.   & our result \\ \hline
    3 &1.41724 &1.33333 &1.41758 \\ \hline
    4 &1.44521&1.38791  &1.44566 \\ \hline
    5 &1.46007 &1.42222  &1.46112 \\ \hline
    6 &&1.44574 &1.47017 \\ \hline
    7 &&1.46286  &1.47583 \\ \hline
    8 &&1.47586  & 1.47972  \\ \hline
    9 &&1.48608 &1.48254\\ \hline
    \end{tabular}
\end{center}
The first column collects the results of V\'ertesi \cite{Vertesi08} for the case $d=3,4,5.$
The second column contains the lower bounds of $K(d)$ proved by Bri\"et-Buhrman-Toner \cite{BrietBuhrmanToner11}.
From this table, one immediately figures out that our results improve the results of V\'ertesi \cite{Vertesi08} and
Bri\"et-Buhrman-Toner \cite{BrietBuhrmanToner11} for $d\leq 8$.

\section{Non-locality of two-qubit Werner states}

In 1989, Werner explicitly constructed LHV models for some entangled mixed bipartite states\cite{Werner89}.
The two-qubit Werner state is given by
\begin{equation}\label{werner}
\rho^W_p=p\Ket{\psi^-}\Bra{\psi^-}+(1-p)\mathds{I}/4,
\end{equation}
where in computational basis,
$\Ket{\psi^-}=(\Ket{01}-\Ket{10})/\sqrt2$ and $0\leq p\leq 1.$
$\rho^W_p$ is separable if and only if $p\leq 1/3$ \cite{Werner89}.
It admits an LHV model for all measurements for $p\leq 5/12$ \cite{Barrett02},
and admits an LHV model for projective measurements for $p\leq 0.6595$ \cite{AcinGisinToner06}.

Let $A_{i}$ and $B_{i}$, $i=1,2,...,m$, be dichotomic observables with respect to the
two qubits, $A_i=\mathbf{a}_i\cdot\mathbf{\sigma}$ and
$B_i=\mathbf{b}_i\cdot\mathbf{\sigma}$, with
$\mathbf{\sigma}=({\sigma_1},{\sigma_2},{\sigma_3})$
the Pauli matrices. $\mathbf{a}_i=(a^{(1)}_i,a^{(2)}_i,a^{(3)}_i)$,
$\mathbf{b}_i=(b^{(1)}_i,b^{(2)}_i,b^{(3)}_i)$ are 3-dimensional real unit vectors.
For any Bell operator,
$$
B(M)=\sum_{i,j=1}^m M_{ij}\,{A_i}\otimes {B_j},
$$
where $M\in \mathcal{M}_m(\R)$ as in (\ref{e:classical}),
the mean value is given by
$$
Tr(B(M)\rho^W_p)=p\sum_{i,j=1}^mM_{ij}\,\mathbf{a_i}\cdot\mathbf{b_j}.
$$
Therefore the maximal violation of the corresponding Bell inequality is given by $p\, K(3)$ and
$\rho^W_p$ admits LHV models for projective measurements if and only if $p\leq 1/K(3)$ \cite{AcinGisinToner06}.
Hence the nonlocality problem of the two-qubit Werner states is reduced to estimate the value of
$K(3)$.

However, the precise value of $K(3)$ is still unknown.
There are various attempts to derive the upper and lower bounds of the
Grothendieck constants. For instance, Krivine \cite{Krivine79} showed that $K(3)\leq 1.5163$.
The Clauser-Horne-Shimony-Holt (CHSH)
inequality implies that $K(3)\geq \sqrt2,$ \cite{ClauserHorneShimonyHolt69}.
V\'ertesi \cite{Vertesi08} constructed Bell inequalities involving
465 settings on each qubit ($\{A_i,B_i\}_{i=1}^{m},\ m\geq 465$) to
show that $K(3)\geq 1.417241$, i.e., $\rho^W_p$ admits no LHV models for $p>0.705596$.
From Theorem \ref{thm:main}, we have shown that $K(3)\geq 1.41758$.
Therefore $\rho^W_p$ admit no LHV models for $p>0.705428$.
This provides the best known bound for the nonlocality of two-qubit Werner states. (see Fig. 1:
Acin-Gisin-Toner show that $\rho^W_p$ admits LHV models for $p\leq 0.66$.
Vertesi shows that $\rho^W_p$ admits no LHV models for $p> 0.705596$.
We show that $\rho^W_p$ admits no LHV models for $p> 0.705428$.)

\begin{figure}[h]
\resizebox{8.5cm}{!}{\includegraphics{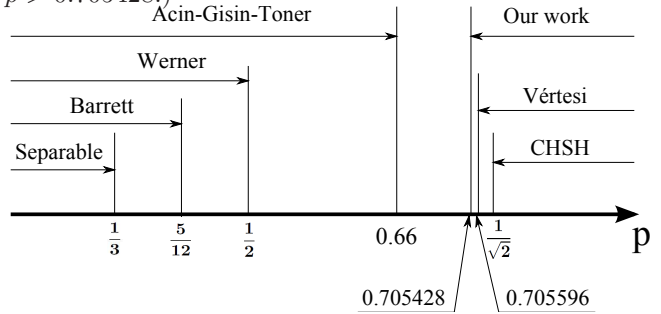}}\\
\caption{Nonlocal properties of two-qubit Werner states.}
\end{figure}

\section{Conclusion and discussions}

We have presented an analytical formula to estimate the lower bounds of the
Grothendieck constants for arbitrary order. It has been shown that
our bounds improve the previously obtained bounds for $d=3,...,8$.
It is also straightforward to calculate the lower bounds for higher order $d$. However, the lower bounds of $K(d)$
for large $d$ derived by this approach cannot exceed $1.5$,
which has been proven in \cite{Vertesi08}, section IV.
This certainly doesn't beat the best lower bound of 1.677 for $K(\infty)$ in \cite{DavieReeds}.

Nevertheless, our lower bound of $K(3)$ gives
a bound of the best threshold value for the nonlocality of the two-qubit Werner states up to date.
In fact, our new lower bounds of the Grothendieck constant of high orders can be also
used to improve the knowledge about the non-locality for higher dimensional
quantum states such that the related mean values of Bell operators are determined by the
Grothendieck constants \cite{AcinGisinToner06}.

So far there are no effective ways to justify whether a quantum entangled state admits local hidden variable models or not.
For two-qubit Werner states, the Grothendieck constant $K(3)$ plays the essential role
that $\rho^W_p$ admits LHV models if and only if $p\leq K(3)^{-1}$.
It would be also interesting to find such quantities for general quantum states.

\bigskip
\noindent{\bf Acknowledgments}\, \, We would like to thank Kai Chen for introducing the problem and
useful discussions. This work was done in Max-Planck Institute for
Mathematics in the Sciences and supported by the NSFC 11275131, 11105226, 11271253 and 11401106;
the Fundamental Research Funds for the Central Universities No.12CX04079A, No.24720122013;
Research Award Fund for outstanding young scientists of Shandong Province No.BS2012DX045.

\smallskip

\end{document}